\theoremstyle{plain}
\newtheorem{theorem}{Theorem}
\theoremstyle{plain}
\newtheorem{proposition}[theorem]{Proposition}
\theoremstyle{plain}
\newtheorem{lemma}[theorem]{Lemma}
\theoremstyle{plain}
\newtheorem{definition}[theorem]{Definition}
\newcommand{\IR}{\mathbb{R}}
\newcommand{\NP}{\textsf{NP}}
\newcommand{\DebatingLeague}{\textsf{DebatingLeague}}
\newcommand{\ThreeTeamDebating}{\textsf{ThreeTeamDebating}}
\newcommand{\BSAT}{\textsf{3-Bounded-3-SAT}}
\newcommand{\DebatingTournament}{\textsf{DebatingTournament}}
\title{This House Proves that Debating is Harder than Soccer}
\date{}
\author{Stefan Neumann\thanks{University of Vienna, Faculty of Computer Science, Vienna, Austria. The research leading to these results has received funding from the 
European Research Council under the European Union’s Seventh Framework Programme (FP/2007-2013) / ERC Grant Agreement no.\ 340506.}
	\and Andreas Wiese\thanks{Max Planck Institute for Computer Science, Saarbr\"ucken, Germany.}}
\begin{document}

\maketitle

\begin{abstract}
During the last twenty years, a lot of research was conducted on the
sport elimination problem: Given a sports league and its remaining
matches, we have to decide whether a given team can still possibly
win the competition, i.e., place first in the league at the end. Previously,
the computational complexity of this problem was investigated only
for games with two participating teams per game. In this paper we
consider Debating Tournaments and Debating Leagues in the British
Parliamentary format, where four teams are participating in each game.
We prove that it is \NP-hard to decide whether a given team can win
a Debating League, even if at most two matches are remaining for
each team. This contrasts settings like football where two teams play
in each game since there this case is still polynomial time solvable.
We prove our result even for a fictitious restricted setting with
only three teams per game. On the other hand, for the common setting of 
Debating Tournaments we show that this problem
is fixed parameter tractable if the parameter is the number of remaining
rounds $k$. This also holds for the practically very important question
of whether a team can still qualify for the knock-out phase of the
tournament and the combined parameter $k+b$ where $b$ denotes the
threshold rank for qualifying. Finally, we show that the latter problem
is polynomial time solvable for any constant $k$ and arbitrary values $b$ 
that are part of the input.
\end{abstract}

\section{Introduction}

Debating and soccer are deeply rooted in our society. Debating dates
back to the times of the ancient greek when already in 460 BC the
citizens of Athens were meeting in one of the first parliaments of
the world for discussions and votings \cite{AthenianDemocracy}.
This gave rise to the fine art of rhetoric, the skill to speak in
a public debate in a convincing manner, to give a solid argumentation
for the provided claims, and to win the support of the audience for
the own case. Since the ancient Greece the art of debating has
developed, and great speeches became milestones of history such as
the famous speech delivered by Martin Luther King on August~28, 1963
containing the dictum ``I have a dream''~\cite{Dream}.
Nowadays, all over the world there are debating societies at universities
and outside academia that are devoted to debates and public speaking.
This has a long tradition, for instance, the Cambridge Union Society
was founded in 1815 and has been run continuously for
more than 200 years now \cite{Cus}. Important for
this paper is that there are debating competitions: teams of
debaters meet and argue for and against the case of a previously specified
motion. The roles (pro and contra) are assigned randomly and thus
the debaters do not necessarily argue for the side that they personally
support.

Like debating, soccer is an integral part of the contemporary societies
in many countries. It is played by 250 million players in more than
200 countries which makes it the world's most popular sport \cite{FootballPopular}.
Even more people are passionate for watching the matches and supporting
their favorite teams. For instance, the final of the last world cup
2014 was watched by more than one billion people world wide \cite{Fifa}.

It is clear that debating and soccer play a significant role in modern
societies. However, one question has remained open: what is harder,
debating or soccer? Empirically there are only very few indications.
There are quotes by soccer players such as ``We lost because we didn't
win.'' (Ronaldo~\cite{Ronaldo}), ``I also told him that verbally.''%
~(Mario Basler~\cite{Basler}), ``It doesn't matter if it is Milano or Madrid as
long as it is Italy.''%
~(Andreas M\"oller~\cite{Moeller}), or ``I can see the carrot at the end of the tunnel.''
(Stuart Pearce~\cite{Pearce}) which suggest that excelling rhetorically
might be harder than playing soccer. On the other hand, the political
careers of heads of states typically surpass their soccer careers
by orders of magnitude. For instance, Gerhard Schr\"oder, the former
chancellor of Germany, played only in the Bezirksliga~\cite{Gerd}
which is nowadays the 7th level of the soccer league system in Germany.
For the current German chancellor Angela Merkel we are not aware of
any non-trivial soccer abilities. However, she is known to occasionally
frequent the German national team's changing room after important
matches~\cite{Merkel}.

From a scientific point of view it is difficult to compare debating
and soccer since they have only few intersection points that allow
a scientifically accurate comparison. One of the few is the following:
consider a league in which soccer/debating teams play matches against
each other according to a pre-defined schedule that indicates on which
match days which respective teams play each other. Consider your favorite team
$t_{1}$. The question is: are there outcomes for all remaining matches
such that $t_{1}$ wins the championship?

In soccer, this question is polynomial time solvable if there are
at most two remaining matches per team and \NP-hard for at most three
matches per team under the three-point rule~\cite{FootballElimination,FIFARules}.
The latter is nowadays ubiquitous in soccer leagues and tournaments
(such as in all FIFA world cups since~1994, in most national soccer
leagues since 1995, and in some of them even much earlier~\cite{CafeFutebol}).
It specifies that if a team wins a match it scores three points for
the league ranking and the losing team scores zero points, if the
match is a draw then both teams score one point.

For debating, we focus in this paper on the British parliamentary
style format that enjoys great popularity world wide and is played
for instance in the world universities debating championships \cite{DebatingRules}.
In this format, four teams are playing in each game
and the winning team scores three points, the second team scores
two points, the third team scores one point, and the fourth team scores
zero points. If in the final ranking multiple teams have the same number of points, then
a tiebreaker is used. For simplicity, in this paper we assume that
this tie-breaker is the total number of FUN papers written by members
of the team and that the team $t_{1}$ has written the most FUN papers
among all participating teams. Thus, $t_{1}$ wins the championship
if there is no team with more points than $t_{1}$ and the corresponding
problem is called \DebatingLeague{}.

\subsection{Our contribution}

In this paper we prove that \DebatingLeague{} is \NP-hard, even
if there are only two remaining matches to play for each team. This
shows that debating is computationally harder than soccer in two ways:
first, if there are only two remaining matches to play for each team
then in soccer we can decide in polynomial time whether a given team
can still win \cite{FootballElimination}. Secondly, for an arbitrary
number of remaining matches soccer is easy under the two-point rule
\cite{FootballElimination}, i.e., the winning team scores two points,
rather than three. The two-point rule has the important feature that
for each match there is a given number of points (two) that are completely
distributed among the participating teams. This is also the case in
debating: in each match there are six points available and they are
all distributed. While with this feature soccer is easy, \DebatingLeague{}
is \NP-hard despite of this which underlines the complexity of the
latter problem. To the best of our knowledge, this is the first time
that the elimination problem has been studied for games with more
than two teams per match. In fact, we prove that our hardness result
even holds in a fictitious setting in which only three teams participate
in a game and they score two, one, and zero points, respectively.

While \DebatingLeague{} is \NP-hard if only two matchdays are remaining,
we can show something different for the system that is typically played
in debating \emph{tournaments}. There, the matches of the teams are
defined in a similar way as in Swiss-system tournaments \cite{SwissRules} (which
are for instance common in chess): after each round the teams are
ordered according to the number of points they scored so far. Then
the teams ranked 1st-4th play one match, the teams ranked 5th-8th
play the second match, and so on. Since the pairings in each round
depend on the initial ranking and the outcomes of the previous rounds,
the above hardness result for \DebatingLeague{} does not apply.
In practice debating tournaments have a first phase organized as above and
a second phase that is played as a knock-out tournament. There is a
threshold $b$ specifying that the first $b$ teams of the final
ranking after the first phase qualify for the knock-out phase,
denoted as \emph{breaking}. A key question that a team typically asks itself
during a tournament is whether it can still break.
Formally, we denote by \DebatingTournament{} the problem of deciding whether $t_1$ can finish on place $b$ or better
with $k$ rounds left in the tournament.

We show that 
\DebatingTournament{} can be solved in time $O(f(k + b)\cdot n)$, i.e., the problem is fixed parameter tractable
for the combined parameter $k + b$. In particular, this implies that for any constant
$k$ it is polynomial time solvable to decide whether $t_1$ can win the tournament,
while for $k=2$ \DebatingLeague{} is \NP-hard. For our algorithm we first prove that if initially the
team $t_{1}$ is ``too far behind'', i.e., has a too large initial
rank depending on $k$ and $b$, then it cannot break anymore. For the remaining case
we provide an algorithm with a running time of $O(f(k +  b)\cdot n)$ for
a suitable function $f$. 
Additionally, we show that for constant $k$ the problem is polynomial
time solvable (for an arbitrary value of $b$ that is part of the
input). Thus, even for arbitrary $b$ the case that $k=2$ is in $\mathsf{P}$,
in contrast to \DebatingLeague.

\subsection{Other related work}

In 1966, Schwartz~\cite{PossibleWinners} proved that using flow networks
it can be decided in polynomial time whether a baseball team can still win a baseball league.
In baseball the winner of a game wins a single point and the looser gets zero points,
there is no tie.
McCormick~\cite{BaseballWithLosses} generalized this result by giving
a polynomial time algorithm which allowed
to fix a number of losses for the team that is supposed to win the league.
Wayne \cite{NewPropertyBaseball} characterised all teams of a baseball league which
can still win the league by giving a threshold value for the number of points
and the number of matches a team must have to be able to win the league.
He further gave a polynomial time algorithm to compute this threshold.
This result was later improved by Gusfield and
Martel~\cite{TheStructureAndComplexityOfSportsEliminationGames}
who gave thresholds for a bigger set of possible
outcomes of the matches. For baseball leagues they 
gave a faster algorithm to determine the threshold and further allowed leagues
with multiple divisions and wild-cards~\cite{TheStructureAndComplexityOfSportsEliminationGames}.

A major difference between baseball and soccer leagues is which
outcomes are possible in a single game. 
For soccer leagues with the
three-point-rule it was proven by \cite{FIFARules} and \cite{FootballElimination}
independently that it is \NP-hard to determine whether a team can win the league.
P\'alv\"olgyi~\cite{DecidingSoccerScores} proved
that when we are given the table of a soccer league and a list of games
that were played so far without their outcomes, it is \NP-hard
to decide whether this table is valid, i.e., whether the distribution
of points to the teams can be achieved by real outcomes of games.
In~\cite{RefiningSportsElimination}, the authors construct a hypergraph
representing the teams and their remaining matches. Depending on certain
properties of this graph they prove multiple hardness results for
the question whether a certain team can still win the competition.

In \cite{GeneralizedSportsCompetition}, Kern and Paulusma consider
games with two teams, but allow a game to have many different outcomes.
They prove that it can be decided in polynomial time whether a team can still
win the competition if and only if
in each match exactly $m$ points can be distributed
arbitrarily to both teams
(for any positive integer $m$). 

\section{Debating League}

In this section we prove that \DebatingLeague{} is \NP-hard, even
if each team has at most two remaining matches to play. First, let
us define the problem formally. Let $T=\{t_{1},\dots,t_{n}\}$ be
the set of teams participating in the debating league. We denote the
set of remaining matches by $M\subset T^{4}$, i.e., we have $(t_{i},t_{j},t_{k},t_{l})\in M$
iff the teams $t_{i},t_{j},t_{k}$ and $t_{l}$ still have to play
against each other in a match. We assume that each possible match
occurs at most once; further, throughout the whole section the
game schedule of remaining matches is fixed.
The winner of each match scores $3$ points,
the second placed team scores $2$ points, the third placed team scores
$1$ point and the loosing team does not get any point. We are given
a \emph{score vector} $s\in\IR^{n}$ with an entry $s_{i}$ for each
team $t_{i}$ that indicates how many points team $t_{i}$ already
obtained before playing the remaining matches. Notice that the tuple $(T,M,s)$
encodes all information we need about the competition. In the \DebatingLeague{}
problem we want to find out whether team $t_{1}$ can still win the
competition. %

\begin{definition}
In the \DebatingLeague{} problem we are given a tuple $(T,M,s)$
and we want to answer the question whether there are outcomes for
all matches $M$, such that at the end there is no team that has more
points than team $t_{1}$.
\end{definition}

We will prove that this problem is already \NP-hard when each team
has at most two remaining matches. We prove this first for a variant
of \DebatingLeague{} where we have only 3 teams per match and each
team has at most two matches left to play. In a game the winner gets
2 points, the second placed team gets 1 point and the looser gets
0 points. We still want to decide whether team $t_{1}$ can win the
competition. We denote this problem \ThreeTeamDebating. It can also
be characterised by a tuple $(T,M,s)$ similarly to above.

\begin{theorem}
\label{Thm:ThreeTeamDebatingNPHard} The \ThreeTeamDebating{} problem
is \NP-hard even when each team has at most two remaining matches to play. 
\end{theorem}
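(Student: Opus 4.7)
The plan is to reduce from \BSAT{}, the \NP-hard variant of 3-SAT in which every variable occurs in at most three clauses and every clause has exactly three literals. Given a formula $\varphi$ with variables $x_1,\dots,x_n$ and clauses $C_1,\dots,C_m$, I will construct in polynomial time a \ThreeTeamDebating{} instance $(T,M,s)$ in which every team has at most two remaining matches and such that $t_1$ can win if and only if $\varphi$ is satisfiable. The team $t_1$ itself is assigned no remaining matches, so its final score equals its initial score $s_1$; the construction will then be calibrated so that a single team finishing with strictly more than $s_1$ points amounts to a local failure of a truth assignment or of a clause.

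The construction uses three kinds of teams. For each variable $x_i$, a \emph{variable gadget} built from literal teams $X_i^{(r)}$ and $\bar X_i^{(r)}$, one pair per clause-occurrence of $x_i$; for each clause $C_j$, a \emph{clause match} whose three participants are one literal team per literal appearing in $C_j$; and, inside both kinds of gadgets, auxiliary \emph{filler teams} whose initial scores are tuned so that each filler is forced into a unique position of its match if $t_1$ is to remain on top. Each literal team participates in exactly two matches, namely one internal match of the variable gadget and the clause match of the unique occurrence it represents, which is what makes the two-matches-per-team bound achievable and which is why 3-boundedness of the formula matters.

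The main technical step, and the step I expect to be the hardest, is the variable gadget. A three-team match has the symmetric $2$--$1$--$0$ outcome, which is a trichotomy rather than a clean Boolean switch, so the gadget must propagate a single truth value consistently across all occurrences of $x_i$ while respecting the degree-two budget at every team. I plan to realise this by chaining the occurrence-pairs $(X_i^{(r)},\bar X_i^{(r)})$ via a short cycle of synchronisation matches, using fillers as fixed anchors that absorb the $2$ of each synchronisation match and thereby force the remaining two slots to split as $1$--$0$ between $X_i^{(r)}$ and $\bar X_i^{(r)}$. By tuning the initial scores of $X_i^{(r)}$ and $\bar X_i^{(r)}$ asymmetrically, the cycle admits exactly two globally consistent schedules: one in which every $X_i^{(r)}$ finishes at most tied with $t_1$ and every $\bar X_i^{(r)}$ has ``room'' to still score two points in its clause match (the ``true'' setting), and one in which the roles are swapped (the ``false'' setting).

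The clause gadget is then straightforward to close: in the clause match for $C_j$ the $2$ points must be awarded to some participating literal team, and by the score calibration above only a literal team whose variable gadget has been set to make that literal true can absorb these two points without exceeding $s_1$; hence the clause match is schedulable exactly when $C_j$ has at least one true literal. The correctness proof then splits into the two standard directions: a satisfying assignment is turned into a legal schedule by choosing the corresponding cycle outcome in each variable gadget and then awarding the $2$ of each clause match to a satisfied literal, and conversely any legal schedule induces a truth assignment by reading off which side of every variable gadget was ``allowed'' to win its clause match, an assignment that must satisfy every clause because the alternative would force some non-authorised team past $s_1$. The degree bound of two matches per team follows by direct inspection of the gadgets, giving the claimed hardness.
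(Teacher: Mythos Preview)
Your overall strategy---reduce from \BSAT{}, build a cycle-shaped variable gadget with two ``orientations'', and have a single clause match per clause whose winner must correspond to a true literal---matches the paper. But the variable gadget you sketch has a genuine gap, and it is exactly the step you flagged as the hardest.

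The problem is that in an elimination instance you can only impose \emph{upper} bounds on a team's total score; you can never force a team to take the top slot of a match. So ``fillers absorb the $2$ of each synchronisation match'' is not enforceable: the only way to make a designated team win is to make the other participants unable to afford $2$ points. But your literal teams $X_i^{(r)},\bar X_i^{(r)}$ sit directly in their clause matches, so they must have room for $2$ points there, and hence nothing stops them from taking the $2$ in the synchronisation match instead (spending their clause budget in the wrong place and leaving the clause match with a different true literal). Worse, under the two-matches-per-team constraint each literal team already uses both its slots on one synchronisation match and one clause match, so there is no mechanism left to chain occurrence $r$ to occurrence $r+1$; the ``short cycle'' you describe cannot actually propagate a truth value across the three occurrences of $x_i$. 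The paper resolves both issues by inserting a layer of indirection: the ring-gadget teams are \emph{not} the teams that play the clause match. The ring is a $6$-cycle of games whose two-game teams have budgets alternating between $2$ and $3$, while every third participant is a one-game team with budget exactly $1$---so those teams \emph{cannot} win, forcing each ring team to win exactly one ring game, which yields precisely two orientations. A separate connector (a team of budget $1$ and an extra game) then carries the orientation out to a fresh budget-$2$ team that plays the clause match. Your proposal is missing this indirection, and without it the gadget does not force consistency.
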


Before we start giving the proof of Theorem~\ref{Thm:ThreeTeamDebatingNPHard},
we introduce a  way to visualize instances of \ThreeTeamDebating{} as graphs.
Suppose we are given an instance $(T,M,s)$ of \ThreeTeamDebating{}
in which each team plays at most two matches. We visualize its matches
via a\emph{ game graph} $G=(V,E)$ in the following way: For each
game $g\in M$, we introduce a \emph{game vertex} $v_{g}\in V$. For
each team $t_{i}$ that participates in two matches $g,g'$, i.e.\ if
$t_{i}\in g$ and $t_{i}\in g'$, we introduce an edge $e_{i}$ connecting
$v_{g}$ and $v_{g'}$. Such an edge will be called a \emph{team edge}.
Each edge will receive a weight $w_{i}$ which encodes how many points
team $t_{i}$ can still get without obtaining more points than team
$t_{1}$. If a team has only one game remaining, we do not introduce
an edge for it. Notice that later team $t_1$ will not be part of the game
graph as we can assume w.l.o.g.\ that it wins all of its remaining games
and has no games left.

We prove Theorem~\ref{Thm:ThreeTeamDebatingNPHard} via a reduction
from \BSAT{} \cite{GareyJohnson} to \ThreeTeamDebating{}. Let $\varphi$ be a $\BSAT$
formula with variables $x_{1},\dots,x_{n}$ and clauses $C_{1},\dots,C_{m}$.
We can assume that each variable occurs in two or three different
clauses and that it occurs at least once positively and at least once
negatively. We can further assume that each clause has two or three literals.

We construct an instance $(T,M,s)$ of \ThreeTeamDebating. First,
we describe gadgets out of which our construction is composed and
prove some of their properties. Afterwards, we describe how to combine
the gadgets to the final instance. In the sequel, we will prove some
properties about our construction. We will use the term ``We can
assume that ...'' for the claim that team $t_{1}$ can still win
the championship if and only if it can still win the championship
for outcomes of the matches where the respective following statement
is true. In our construction, $t_{1}$ has no remaining game to play.
We distinguish the other teams into \emph{two-game teams} and \emph{one-game
teams}, where the former type has two remaining games to play and
the latter type has one remaining game to play. For each team, we
will define how many points it can still score without getting more
points than $t_{1}$. We will not exactly specify how many points
each team has initially since it matters only how many points it can
still get without overtaking $t_{1}$. 

For each variable $x$ in $\varphi$ we introduce a \emph{ring gadget}.
Assume that $x$ occurs in the three clauses $C_{i},C_{j},C_{k}$. The
ring gadget for a variable $x$ consists of the six games given by the set
$G_{x}:=\{g_{x,C_{i}}^{1},g_{x,C_{i}}^{2},g_{x,C_{j}}^{1},g_{x,C_{j}}^{2},g_{x,C_{k}}^{1},g_{x,C_{k}}^{2}\}$
and six teams two-game teams as specified by $T_{x}:=\{t_{x,C_{i}}^{1},t_{x,C_{i}}^{2},t_{x,C_{j}}^{1},t_{x,C_{j}}^{2},t_{x,C_{k}}^{1},t_{x,C_{k}}^{2}\}$.
If $x$ appears in only two clauses $C_{i},C_{j}$ we use the same
setup for a fictitious clause $C_{k}$.

The games of the teams in $T_{x}$ are visualized in Figure~\ref{fig:AssignedVariablecircle}.
Ignoring teams which are not from the set $T_{x}$ and which we will
introduce later, the game $g_{x,C_{i}}^{1}$ is played by the teams
$t_{x,C_{k}}^{2},t_{x,C_{i}}^{1}$, the game $g_{x,C_{i}}^{2}$ is
played by the teams $t_{x,C_{i}}^{1},t_{x,C_{i}}^{2}$, the game $g_{x,C_{j}}^{1}$
is played by the teams $t_{x,C_{i}}^{2},t_{x,C_{j}}^{1}$, the game
$g_{x,C_{j}}^{2}$ is played by the teams $t_{x,C_{j}}^{1},t_{x,C_{j}}^{2}$,
the game $g_{x,C_{k}}^{1}$ is played by the teams $t_{x,C_{j}}^{2},t_{x,C_{k}}^{1}$,
and the game $g_{x,C_{k}}^{2}$ is played by the teams $t_{x,C_{k}}^{1},t_{x,C_{k}}^{2}$.
Thus, when visualizing the games in $G_{x}$ and the teams in $T_{x}$
they form a cycle. Each team $g_{x,C_{\ell}}^{1}$ with $\ell\in\{i,j,k\}$
is allowed to get 2 points and each team $g_{x,C_{\ell}}^{2}$ with
$\ell\in\{i,j,k\}$ can get 3 points. The other teams participating
in the games $G_{x}$ (to be defined later) will only be able to score
exactly 1 point and hence they will not be able to win a game.
Hence, we can assume that in each game
$g\in G_{x}$ one team in $T_{x}$ that plays in $g$ must score 2
points. Furthermore, each team in $T_{x}$ can win at most one game
and since there are six games in $G_{x}$ and six teams in $T_{x}$,
each team in $T_{x}$ must win exactly one game.

One way to visualize the outcome of the circle games is to orient
each edge in the game graph. The team edge of a team $t\in T_{x}$ points towards
the unique game in which $t$ scores 2 points. In this viewpoint,
the following lemma implies that we can assume that all edges of the
cycle are either oriented clockwise or counter-clockwise.
\begin{proposition}
\label{pro:orientation-ring-gadgets} We can assume that either
the ring gadget is \emph{oriented clockwise}, i.e.\ game $g_{x,C_{\ell}}^{z}$
is won by team $t_{x,C_{\ell}}^{z}$ for $\ell\in\{i,j,k\}$ and $z\in\{1,2\}$,
or the ring gadget is \emph{oriented counter-clockwise}, i.e.\ game
$g_{x,C_{\ell}}^{2}$ for $\ell\in\{i,j,k\}$ is won by team $t_{x,C_{\ell}}^{1}$
and the games $g_{x,C_{i}}^{1}$,$g_{x,C_{j}}^{1}$,$g_{x,C_{k}}^{1}$
have winners $t_{x,C_{k}}^{2}$,$t_{x,C_{i}}^{2}$,$t_{x,C_{j}}^{2}$,
respectively. 
\end{proposition}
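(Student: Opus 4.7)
The plan is to extract the purely combinatorial structure forced on the six games in $G_x$ and six teams in $T_x$, and then show that this structure admits only the two outcomes listed.

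First I would establish that the assignment of winners induces a perfect matching between $T_x$ and $G_x$. The preceding paragraph of the construction already gives us two facts: (i) in every game $g\in G_x$, exactly one of the two $T_x$-teams playing in $g$ scores $2$ points, because the non-$T_x$ teams that will later be added to these games are capped at $1$ point; (ii) each $T_x$-team wins at most one of its two games, because its allowed weight is only $2$. Since $|T_x|=|G_x|=6$ and every game in $G_x$ must have a $T_x$-winner while every $T_x$-team can win at most one game, a double count forces each $T_x$-team to win exactly one of its two games and each $G_x$-game to be won by exactly one of its two $T_x$-teams.

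Next I would translate this matching into an orientation of the $6$-cycle that $G_x\cup T_x$ forms in the game graph: orient each team edge toward the game its team wins. The matching property says that at every game vertex the in-degree is exactly $1$, hence the out-degree is also exactly $1$. In a $6$-cycle, an orientation in which every vertex has in-degree and out-degree $1$ must be a directed cycle, and there are exactly two such directed cycles: one going clockwise and one going counter-clockwise around the construction.

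Finally, I would read off from these two directed cycles the outcomes stated in the proposition by tracing the winners starting from $g_{x,C_i}^1$: one orientation produces the pattern $t_{x,C_\ell}^z$ wins $g_{x,C_\ell}^z$ for all $\ell,z$, while the other produces exactly the alternative pattern in which $g_{x,C_\ell}^2$ is won by $t_{x,C_\ell}^1$ and $g_{x,C_i}^1,g_{x,C_j}^1,g_{x,C_k}^1$ are won by $t_{x,C_k}^2,t_{x,C_i}^2,t_{x,C_j}^2$. The only delicate point I foresee is the standing hypothesis that non-$T_x$ teams in $G_x$-games really are capped at $1$ point; this is asserted in the setup but is only substantiated once the remaining gadgets are introduced, so I would invoke it rather than prove it here. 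Given that hypothesis, the in-degree constraint on the $6$-cycle does all the work.
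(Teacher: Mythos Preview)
Your argument is correct and mirrors the paper's own justification, which is given in the paragraphs immediately preceding the proposition rather than in a separate proof block: the paper likewise uses the double count ($|T_x|=|G_x|=6$, each game must have a $T_x$-winner, each $T_x$-team can win at most once) and then the edge-orientation picture on the $6$-cycle to conclude the two possible directed cycles. One small slip to fix: not every $T_x$-team has allowed weight $2$ --- the teams $t_{x,C_\ell}^2$ have weight $3$ --- but since winning two games would cost $4$ points your conclusion that each team wins at most one game is unaffected.
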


Later, the two possible orientations of the ring gadget for variable
$x$ will correspond to setting the variable $x$ to true or to false.
Next, we introduce a \emph{clause game} $g_{C}$ for each clause $C$
in $\varphi$. Let $C$ be a clause with variables $x,y,z$. We introduce
three two-game teams $t_{x,C}^{4},t_{y,C}^{4},t_{z,C}^{4}$ that play
$g_{C}$ and each of them will play in another game that we will define
later. Each of them can still score 2 points. Intuitively, the team
among them that scores 2 points in $g_{C}$ will correspond to the
variable that satisfies the clause $C$ in a satisfying assignment.
Note that for the names of the teams we do not distinguish whether
a variable $x$ occurs positively or negatively in $C$.

We describe now how we connect the clause games with the ring gadgets,
see Figure~\ref{fig:AssignedVariablecircle}. Let $x$ be a variable
that occurs in a clause $C$. For this occurrence, we introduced
the team $t_{x,C}^{4}$ above. We now introduce a game $g_{x,C}^{3}$,
a two-game team $t_{x,C}^{3}$, and a one-game team $t_{x,C,d}^{3}$.
The team $t_{x,C}^{3}$ can still get $1$ point and the team
$t_{x,C,d}^{3}$ can still get $2$ points. We define that $g_{x,C}^{3}$
is the second game of $t_{x,C}^{4}$, the only game of $t_{x,C,d}^{3}$,
and one of the two games that $t_{x,C}^{3}$ plays. The intuition
behind this construction is that if $t_{x,C}^{3}$ gets $0$ points
in its second game (that we have not specified yet) then the team
$t_{x,C}^{4}$ can score up to $2$ points in game $g_{C}$ (without
getting more points in total than $t_{1}$). On the other hand, if
$t_{x,C}^{3}$ gets $1$ point in its other game, then $t_{x,C}^{4}$
can score only up to 1 point in $g_{C}$ and in particular, it cannot
score 2 points in $g_{C}$ anymore. Later, the first case will correspond
to the case that $x$ satisfies $C$ whereas the second case will
correspond to the case that $x$ does not satisfy $C$.
\begin{proposition}
\label{pro:clauseGadget} Let $x$ be a variable appearing in a clause
$C$. We can assume that 
\begin{itemize}
\item if $t_{x,C}^{3}$ scores $1$ point in a game different than $g_{x,C}^{3}$
that it plays, then $t_{x,C}^{4}$ scores at most $1$ point in
game $g_{C}$, and 
\item if $t_{x,C}^{3}$ scores $0$ points in a game different than
$g_{x,C}^{3}$ that it plays, then $t_{x,C}^{4}$ can score
up to 2 points in the game $g_{C}$. 
\end{itemize}
\end{proposition}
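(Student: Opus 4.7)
The plan is to analyze directly the three-team game $g_{x,C}^{3}$, whose three participants are the two-game team $t_{x,C}^{4}$ with remaining budget $2$, the two-game team $t_{x,C}^{3}$ with remaining budget $1$, and the one-game team $t_{x,C,d}^{3}$ with remaining budget $2$. Since a three-team game distributes exactly $2+1+0=3$ points in total, the outcome of $g_{x,C}^{3}$ is fully determined by the ordering of these three teams on the scoreboard. Following the paper's ``We can assume'' convention, I will only consider outcomes in which no team exceeds its remaining budget, because any outcome that pushed a team past $t_{1}$'s score would already cost $t_{1}$ the championship.

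For the first bullet, suppose $t_{x,C}^{3}$ scores one point in its other game. Since its total remaining budget is $1$, team $t_{x,C}^{3}$ must then score $0$ in $g_{x,C}^{3}$, so the two nonzero ranks in that game, worth $2$ and $1$ points, are split in some order between $t_{x,C}^{4}$ and $t_{x,C,d}^{3}$. In particular $t_{x,C}^{4}$ picks up at least $1$ point in $g_{x,C}^{3}$, which leaves at most $2-1=1$ point of its budget available for $g_{C}$, as claimed.

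For the second bullet, suppose instead that $t_{x,C}^{3}$ scores $0$ points in its other game. I exhibit one concrete outcome of $g_{x,C}^{3}$ that is consistent with $t_{x,C}^{4}$ later scoring $2$ in $g_{C}$: declare $t_{x,C,d}^{3}$ the winner of $g_{x,C}^{3}$ (scoring $2$), $t_{x,C}^{3}$ second (scoring $1$), and $t_{x,C}^{4}$ third (scoring $0$). Then $t_{x,C,d}^{3}$ exactly attains its budget of $2$, team $t_{x,C}^{3}$ totals $0+1=1$, matching its budget, and $t_{x,C}^{4}$ has its full budget of $2$ free to spend in $g_{C}$. I do not anticipate any real obstacle: the entire argument is a short case analysis on which team in $\{t_{x,C}^{4},t_{x,C}^{3},t_{x,C,d}^{3}\}$ takes which of the three slots in $g_{x,C}^{3}$, and the only thing that needs care is bookkeeping the remaining-point budgets of the three participants.
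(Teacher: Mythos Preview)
Your argument is correct and matches the paper's approach: the paper does not give a separate formal proof of this proposition but only the intuition paragraph immediately preceding it, and your case analysis of the game $g_{x,C}^{3}$ is exactly the formalization of that intuition (budget $1$ for $t_{x,C}^{3}$ forces it last in $g_{x,C}^{3}$, pushing at least one point onto $t_{x,C}^{4}$; otherwise the one-game dummy can absorb the win and $t_{x,C}^{3}$ the second place, freeing $t_{x,C}^{4}$ entirely).
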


We specify the second game for the team $t_{x,C}^{3}$ (i.e.,
the game different than $g_{x,C}^{3}$ that it plays). If $x$
appears positively in clause $C$ then this second game is defined
to be $g_{x,C}^{2}$, otherwise, this second game is defined to be
$g_{x,C}^{1}$. If the variable $x$ appears in three clauses $C_i, C_j, C_k$ then
three games from the $G_{x}$ are still missing one team, exactly
one per clause. For these games we add the one-game
teams $T_{x}^{d}:=\{t_{x,d_i},t_{x,d_{j}},t_{x,d_{k}}\}$,
each of them playing the game with the corresponding clause in the
subscript and each of them allowed to score 1 point. If $x$ appears
in only two clauses then we similarly add two one-game teams such
that each of them is allowed to score 1 point and by this we ensure
that each game in $G_{x}$ has three teams. This completes the definition
of the instance.
\begin{lemma}
If $\varphi$ is satisfiable then there is an outcome of the defined
instance of \ThreeTeamDebating{} such that no team gets more points
than $t_{1}$. \end{lemma}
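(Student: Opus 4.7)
The plan is to turn a satisfying assignment $\alpha$ of $\varphi$ into an explicit outcome of all remaining games, built in three stages: fixing the ring-gadget orientations, completing the second- and third-place scores inside the ring games, and finally handling the clause games $g_C$ together with the bridge games $g_{x,C}^3$. For the first stage I orient the ring gadget of variable $x$ counter-clockwise when $\alpha(x)$ is true and clockwise when $\alpha(x)$ is false; by Proposition~\ref{pro:orientation-ring-gadgets} this is a valid choice of winners for the six games in $G_x$, so only the second- and third-place scores remain to be assigned inside each ring game.

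For the second stage I would use that every team in $T_x$ wins exactly one of its two ring games (scoring $2$) and hence, given the allowances $2$ and $3$, can absorb only $0$ or $1$ further point in its other ring game. A short case analysis on the orientation and on the polarity of $x$ in the clause $C_\ell$ then shows that the remaining scores $1$ and $0$ in each ring game can be split between the losing $T_x$-team and the third team so that $t_{x,C_\ell}^3$ ends up with score $0$ in its ring game exactly when the literal of $x$ in $C_\ell$ is satisfied by $\alpha$, and with score $1$ otherwise; in either situation the dummy $t_{x,d_\ell}$ in the other ring game of the pair takes the complementary value in $\{0,1\}$ and stays within its allowance of $1$. This is the critical step, since it is the place where $\alpha$ is encoded into actual game scores.

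For the third stage, $\alpha$ satisfies every clause $C$, so I pick some variable $x$ whose literal in $C$ is satisfied; then $t_{x,C}^3$ scored $0$ in its ring game and Proposition~\ref{pro:clauseGadget} permits $t_{x,C}^4$ to take $2$ points in $g_C$. I declare $t_{x,C}^4$ the winner of $g_C$ and distribute $1$ and $0$ arbitrarily to the other two $t^4$-teams in $g_C$. In each bridge game $g_{y,C}^3$ the scores $\{2,1,0\}$ are then distributed among $t_{y,C}^4$, $t_{y,C}^3$ and the one-game team $t_{y,C,d}^3$ by a routine case split on what $t_{y,C}^4$ took in $g_C$ and on whether $t_{y,C}^3$ scored $0$ or $1$ in its ring game, letting $t_{y,C,d}^3$ absorb the surplus within its allowance of $2$. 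The main obstacle is the bookkeeping: the two-game teams $t_{x,C_\ell}^{1}$, $t_{x,C_\ell}^{2}$, $t_{x,C}^3$ and $t_{x,C}^4$ each link two games whose outcomes must be jointly consistent with tight per-team allowances, so I would close the argument by walking through the three stages above and, for every team, totalling its score across its games and comparing it with the allowance prescribed in the construction; since by choice of those allowances no team then overtakes $t_1$, the outcome witnesses the lemma.
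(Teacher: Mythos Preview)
Your proposal is correct and follows essentially the same approach as the paper: orient each ring gadget according to the truth value of its variable, have the $t^4$-team of a satisfying literal win each clause game $g_C$, and then locally distribute the remaining points in the ring and bridge games so that every team stays within its allowance. The only cosmetic difference is in the bookkeeping for the non-winning scores inside the ring games---you tie $t_{x,C}^3$'s ring score to whether the literal is satisfied and give the paired dummy the complementary value, whereas the paper simply gives every dummy in $T_x^d$ one point and lets $t_{x,C}^3$ score $0$ only for the \emph{chosen} satisfying literal---but both schemes are easily seen to be feasible.
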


\begin{proof}
Suppose we are given a satisfying assignment to the variables in $\varphi$.
From this satisfying assignment we will construct outcomes of the
games, such that team $t_{1}$ wins the championship. Intuitively,
we will want to assign all points as depicted in Figure~\ref{fig:AssignedVariablecircle}.
We will now describe this formally.

Let $x$ be a variable. If $x$ is true, then we orient the ring gadget
of $x$ counter-clockwise according to the left image in Figure~\ref{fig:AssignedVariablecircle};
formally, the winners of the games $G_{x}$ are assigned as defined
in Proposition~\ref{pro:orientation-ring-gadgets}. If $x$ is false,
then the ring gadget of $x$ is oriented clockwise according to the
right image in Figure~\ref{fig:AssignedVariablecircle}. All one-game
teams $T_{x}^{d}$ will place second in their games and thus obtain
a single point each.

Consider a clause $C$ with variables $x,y,z$. In the satisfying
assignment one of them must satisfy $C$. Assume w.l.o.g.\ that $x$
satisfies $C$. Then we let team $t_{x,C}^{4}$ score 2 points in
the game $g_{C}$ and we let an arbitrary team among $t_{y,C}^{4},t_{z,C}^{4}$
score 1 point and the other one 0 points. For the game $g_{x,C}^{3}$
we let the one-game team score 2 points and team $t_{x,C}^{3}$ score
1 point, team $t_{x,C}^{4}$ obtains no additional point from this
game. Team $t_{x,C}^{3}$ further scores $0$ points in its remaining
game (game $g_{x,C}^{1}$ or $g_{x,C}^{2}$, depending on whether
$x$ appears negatively or positively in $C$) and the remaining
point of this game goes to the team which can obtain 3 points in total.
In the game $g_{y,C}^{3}$ we let the team $t_{y,C}^{4}$ score $1$
point and team $t_{y,C}^{3}$ scores $1$ point from its game in $G_{y}$.
For the teams and games for variable $z$ we use the same distribution
of points as for $y$. We define the outcomes of the games in the
same way for each clause $C$. See Figure~\ref{fig:AssignedVariablecircle}
for a sketch of the outcomes described above.

\begin{figure}[!tbp]
\centering %
\begin{minipage}[b]{0.45\textwidth}%
 \includegraphics[width=\textwidth]{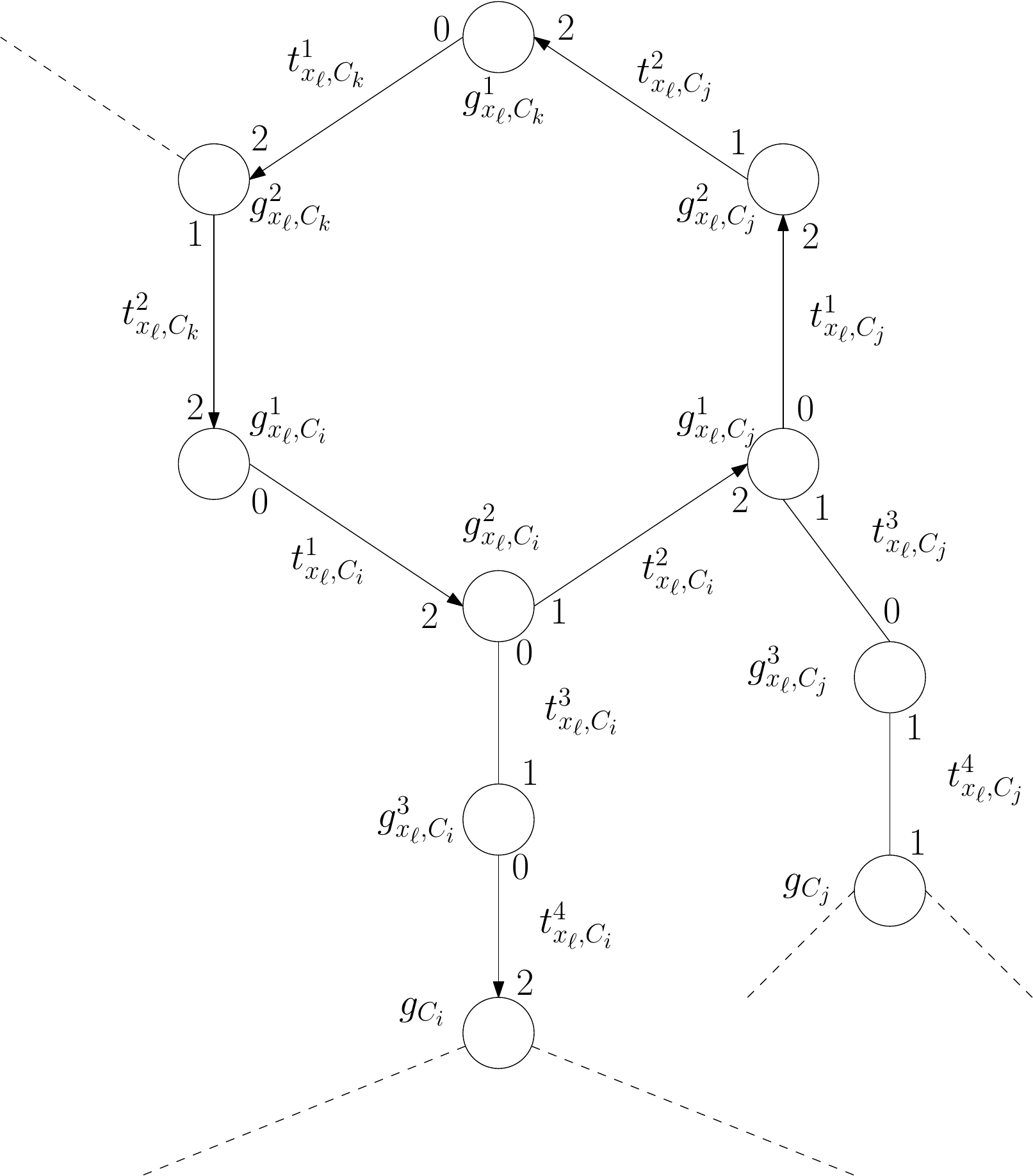} %
\end{minipage}\hfill{}%
\begin{minipage}[b]{0.45\textwidth}%
 \includegraphics[width=\textwidth]{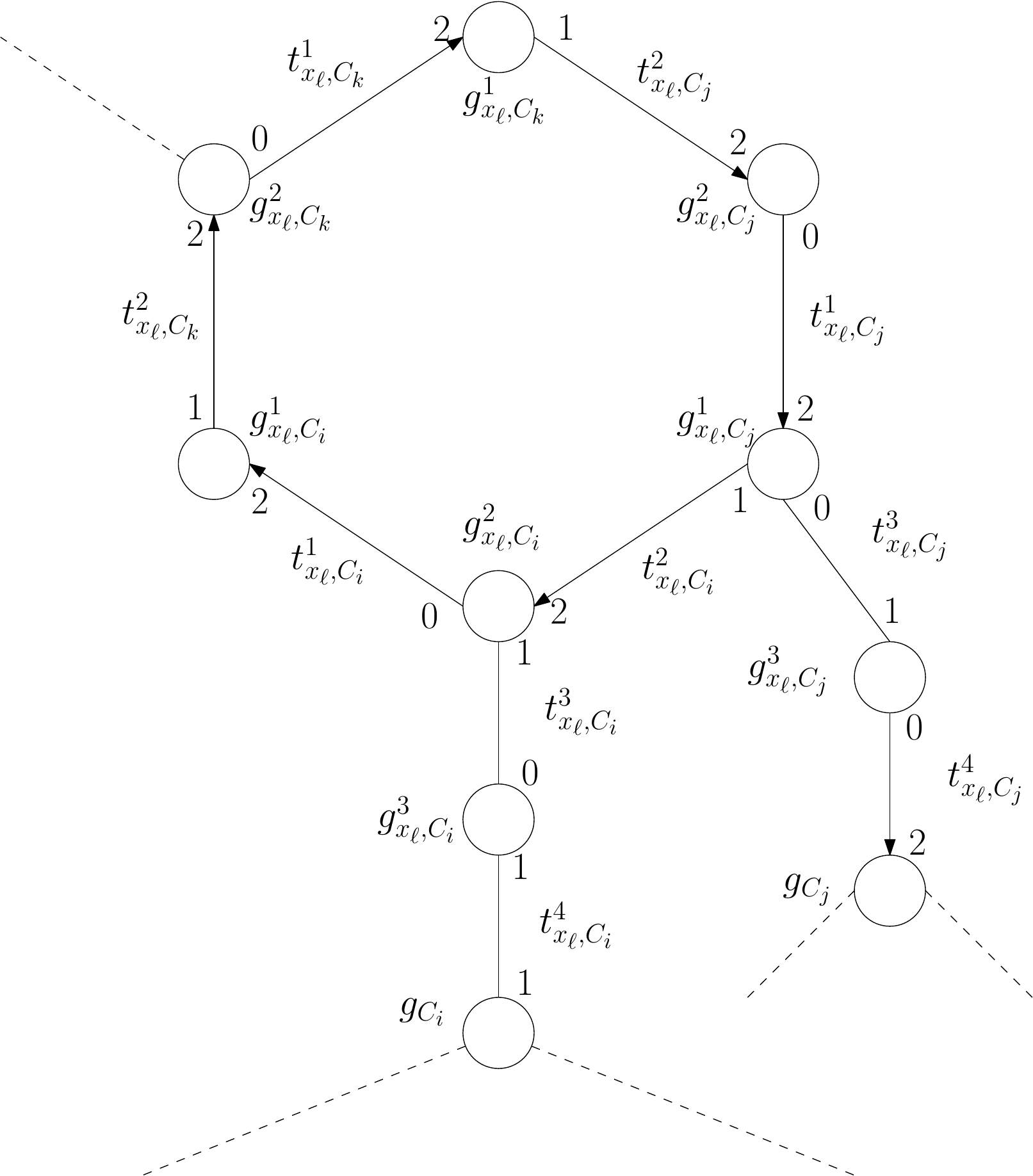} %
\end{minipage}\protect\caption{An excerpt of the game graph for variable $x_{\ell}$ which occurs
in clauses $C_{i}$ and $C_{k}$ positively and in $C_{j}$ negatively.
In the left image the outcomes of the games for $x_{\ell}=\emph{true}$ are
visualised, in the right image we have $x_{\ell}=\emph{false}$. The edges
are directed towards the game that was won by the corresponding team;
the numbers close to the game vertices show how many points the associated
two-game teams win in this game. }

\label{fig:AssignedVariablecircle} 
\end{figure}

All games distribute all of their points: In the above assignment,
for each clause game $g_{C}$ we have distributed all points by construction.
For all $x,C$, the games $g_{x,C}^{3}$ have a one-game team as a
winner and by construction the second place goes to either $t_{x,C}^{3}$
or $t_{x,C}^{4}$. It is left to to argue about the games from the
$G_{x}$. For each $g_{x,C}^{z}\in G_{x}$ with $z\in\{1,2\}$ we
must have a winner since we assigned the winners as defined in Proposition~\ref{pro:orientation-ring-gadgets}.
If $g_{x,C}^{z}$ has a one-game team participating then this team
can place second in our construction and hence all points are distributed.
If $g_{x,C}^{z}$ has only two-game teams, then we constructed our
variable assignment such that $g_{x,C}^{z}$ gives its last point
to $t_{x,C}^{3}$, if $x$ was not used to satisfy $C$. If $x$ was
used to satisfy $C$, then 1 point goes to its participating team
which can still get 3 points
(by the orientation
for the ring gadget we picked, this team cannot have won $g_{x,C}$).
Finally, by construction there is no team that scores more points
than we had specified, i.e., there is no team that scores more points
than $t_{1}$. %
\end{proof}

\begin{lemma}
If there is an outcome of the games in the defined instance of \ThreeTeamDebating{}
such that no team gets more points than $t_{1}$ then the formula
$\varphi$ is satisfiable. \end{lemma}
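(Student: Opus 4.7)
The plan is to read a satisfying assignment off the given game outcomes: for each variable $x$, apply Proposition~\ref{pro:orientation-ring-gadgets} to its ring gadget and set $x := \text{true}$ if the gadget is oriented counter-clockwise and $x := \text{false}$ otherwise, mirroring the correspondence used in the previous lemma. It then suffices to show that every clause $C$ with variables $x,y,z$ is satisfied by this assignment.

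To this end I would focus on the clause game $g_{C}$. Since all three points of $g_{C}$ must be distributed among its three two-game participants $t_{x,C}^{4}, t_{y,C}^{4}, t_{z,C}^{4}$, exactly one of them, say $t_{v,C}^{4}$, scores $2$ in $g_{C}$. By the contrapositive of the first bullet of Proposition~\ref{pro:clauseGadget}, team $t_{v,C}^{3}$ cannot score $1$ in its second game, hence it scores $0$ there; by construction this second game is $g_{v,C}^{2}$ if $v$ appears positively in $C$ and $g_{v,C}^{1}$ otherwise. The crucial claim is that this forced zero score is consistent only with the orientation of the ring gadget for $v$ in which $v$ satisfies $C$.

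I would verify the claim by a short case analysis on the sign of $v$ in $C$. Assume $v$ appears positively in $C$, so that $t_{v,C}^{3}$ plays $g_{v,C}^{2}$ alongside $t_{v,C}^{1}$ and $t_{v,C}^{2}$. If the gadget were oriented clockwise (so $v = \text{false}$), then $t_{v,C}^{1}$ would already have reached its allowed maximum of $2$ points by winning $g_{v,C}^{1}$, while $t_{v,C}^{2}$ would take $2$ points in $g_{v,C}^{2}$; the remaining point of $g_{v,C}^{2}$ would then have to go to $t_{v,C}^{1}$, since $t_{v,C}^{3}$ scored $0$ by assumption, contradicting the cap on $t_{v,C}^{1}$. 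Hence the gadget is counter-clockwise, $v$ is true, and $v$ satisfies the positive occurrence. The case in which $v$ appears negatively in $C$ is symmetric: one replaces $g_{v,C}^{2}$ by $g_{v,C}^{1}$ and notes that in the counter-clockwise orientation the winner $t_{v,C_{\ell-1}}^{2}$ of $g_{v,C}^{1}$, together with the $2$ already scored by $t_{v,C}^{1}$ in $g_{v,C}^{2}$, leaves no legal recipient for the leftover point.

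The main obstacle I expect is precisely this bookkeeping: for each combination of sign and orientation one must identify which two-game teams in the relevant ring game are already at their point cap and verify that the ``wrong'' orientation leaves no team able to absorb the remaining point once $t_{v,C}^{3}$ is pinned to $0$. Once this case distinction is carried out, chaining it with Propositions~\ref{pro:orientation-ring-gadgets} and~\ref{pro:clauseGadget} immediately shows that the assignment derived from the orientations satisfies every clause, proving $\varphi$ satisfiable.
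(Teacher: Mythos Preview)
Your proposal is correct and follows essentially the same line as the paper's proof: read off the truth value of each variable from the orientation of its ring gadget (Proposition~\ref{pro:orientation-ring-gadgets}), then for each clause $C$ take the variable $v$ whose team $t_{v,C}^{4}$ scores $2$ in $g_{C}$, use Proposition~\ref{pro:clauseGadget} to force $t_{v,C}^{3}$ to score $0$ in its ring game, and argue that this pins down the orientation of the ring so that $v$ satisfies $C$. Your explicit case analysis on the sign of $v$ and the orientation of the ring is exactly the content the paper compresses into the single sentence that the remaining point in the ring game must go to a cap-$3$ team while a cap-$2$ team wins; you have merely unpacked that bookkeeping.
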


\begin{proof}
Suppose there is an outcome of the games such that no team gets more
points than $t_{1}$. We construct an assignment to the variables
in $\varphi$ that satisfies the formula. Let $x$ be a variable. Consider
the ring gadget for $x$. Due to Proposition~\ref{pro:orientation-ring-gadgets}
for the scores of the teams in $G_{x}$ there are two possibilities.
We set $x$ to be true if its ring gadget (as presented in Figure~\ref{fig:AssignedVariablecircle})
is oriented counter-clockwise in the sense of Proposition~\ref{pro:orientation-ring-gadgets},
otherwise, we set $x$ to false.

We prove that this variable assignment satisfies $\varphi$. Consider
a clause $C$ with three variables $x,y,z$. Assume w.l.o.g.\ that
$t_{x,C}^{4}$ scores 2 points in game $g_{C}$. We claim that then
$x$ satisfies $C$. Proposition~\ref{pro:clauseGadget} implies
that since $t_{x,C}^{4}$ scores 2 points in game $g_{C}$, team $t_{x,C}^{3}$
cannot get any points in game $g_{x,C}^{3}$. Hence, the other game
$g$ of $t_{x,C}^{3}$ must be won by a team which can get $2$ points
and have a second placed team which can achieve $3$ points.

If $x$ appears positively in $C$, then by construction we have $g=g_{x,C}^{2}$.
But then with the previous observation and Proposition~\ref{pro:orientation-ring-gadgets},
the ring gadget is oriented counter-clockwise and thus we have set $x$
to true. Thus, $x$ must satisfy $C$. On the other hand, if $x$
appears negatively in $C$, then we have $g=g_{x,C}^{1}$. This implies
that the ring gadget is oriented clockwise and thus we have set $x$
to false and hence $x$ satisfies $C$.
\end{proof}

Finally, we observe that in the above construction each team has at
most two remaining matches. This completes the proof of Theorem~\ref{Thm:ThreeTeamDebatingNPHard}.
Now we can show that \DebatingLeague{} is \NP-hard.

\begin{theorem}
  \label{Thm:DebatingLeagueNPHard}
  The \DebatingLeague{} problem is \NP-hard even if each team has
  at most two remaining matches to play.
\end{theorem}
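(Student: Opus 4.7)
My plan is to reduce \ThreeTeamDebating{} to \DebatingLeague{}, preserving the ``at most two remaining matches per team'' property so that Theorem~\ref{Thm:ThreeTeamDebatingNPHard} immediately yields the claim. Given a \ThreeTeamDebating{} instance $(T,M,s)$, the central idea is to pad each 3-team match with a fresh dummy team that is forced to finish last; the remaining three teams then receive $3,2,1$ points, which is exactly one point more than they would score for the same relative placement under the 3-team rules.

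The first step is the construction: for every $g=(t_i,t_j,t_k)\in M$ I introduce a new team $d_g$ and add the 4-team match $(t_i,t_j,t_k,d_g)$ to $M'$. Dummy teams have exactly one remaining match and original teams keep their number of matches, so no team has more than two remaining games. The second step is to choose the score vector $s'$ so that, in any winning outcome, each $d_g$ finishes last. I will do this by setting the initial score of every dummy team equal to $t_1$'s (fixed) final score, so that gaining even a single point would put $d_g$ strictly ahead of $t_1$. To compensate for the $+1$ shift each original team receives per remaining match (moving from the scale $0,1,2$ to $1,2,3$), I will subtract $m_i$, the number of remaining matches of $t_i$, from its initial score. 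A uniform additive constant is then applied to all teams (including $t_1$ and the dummies) to keep all entries non-negative; this does not change the problem.

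The third step is to verify the correspondence between outcomes. If $(T,M,s)$ admits a winning schedule for $t_1$, placing each $d_g$ last and keeping the same relative ranking of the other three teams in every match yields a winning schedule for $(T',M',s')$. Conversely, any winning schedule for $(T',M',s')$ must place every $d_g$ last by the choice of $s'_{d_g}$, and the resulting relative rankings among the original teams translate back to a valid \ThreeTeamDebating{} outcome with matching point totals after undoing the $-m_i$ shift.

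The only real obstacle is forcing the dummies to come last, which is exactly what the initial-score choice accomplishes; everything else is routine bookkeeping. I expect no difficulty with the at-most-two-matches constraint, since the construction leaves the original teams' match counts unchanged and places each dummy team into only a single match.
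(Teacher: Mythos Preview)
Your reduction is correct and follows the same overall strategy as the paper: pad each three-team match of a \ThreeTeamDebating{} instance with a fresh one-game dummy to obtain a four-team match, preserving the at-most-two-matches property. The paper carries this out with a small twist relative to your version: it lets each newly added dummy \emph{win} its game (it is allowed to score $3$ points), so the three original teams receive $2,1,0$ exactly as in \ThreeTeamDebating{} and no score shift is needed; the paper also first lets $t_1$ win all of its remaining games and replaces $t_1$ in those games by further one-game dummies, so that $t_1$ no longer participates in any match. Your variant instead forces the dummy to finish \emph{last} via the choice of $s'_{d_g}$, which shifts each original team's score by $+1$ per remaining match and therefore requires the $-m_i$ correction; it also keeps $t_1$ in its games and implicitly forces $t_1$ to win them all (otherwise a zero-scoring dummy would already tie $t_1$'s maximum and exceed its actual score). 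Both executions are valid: the paper's avoids the arithmetic bookkeeping at the price of a short ``swap the dummy into first place'' argument in the reverse direction, while yours makes the dummies' placement automatic through the initial scores but needs the compensating shift.
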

\begin{proof}
Let $(T',M',s')$ be an instance of \ThreeTeamDebating{}. We modify
it to an instance of \DebatingLeague{}.
We begin by letting team $t_1$ win all of its remaining matches in the
\ThreeTeamDebating{} instance and updating the score vector
accordingly for all teams. In each of the games won by team $t_1$,
we replace $t_1$ by a dummy team that plays exactly
one match and can still score two points.
Now we update the instance to have four teams per match:
For each game $g$, we add
a dummy team that plays only in $g$ and that can still score
three points. Let $(T,M,s)$ denote the resulting instance of \DebatingLeague.
Observe that in this instance $t_1$ is not participating in any game.

If $(T',M',s') \in \ThreeTeamDebating$, then we can copy the outcomes
of all games to $(T,M,s)$ and then assign 3 points to each dummy team.
This gives a solution for \DebatingLeague.

On the other hand, consider an outcome of $(T,M,s)$ where $t_{1}$ wins
the championship.
Then the newly added dummy teams do not necessarily have to win their
respective games.
However, we can resolve this in the following way:
For each game won by a non-dummy team,
we change the outcome of the match such that the newly added dummy
team and the winning non-dummy team change positions.
Hence, a newly added dummy
team obtains 3 points and the other teams just get fewer points than
before. Thus, all newly added dummy teams win their respective games. 
We do a similar manipulation to make sure that the dummy teams that
replaced $t_1$ score exactly 2 points and we replace them by $t_1$. 
This implies that the outcomes of the matches disregarding the dummy teams give a solution
for $(T',M',s')$.
\end{proof}

We would like to point out that the above construction can easily
be adapted to show that it is also \NP-hard to decide whether $t_{1}$
can finish among the $b$ best teams for any constant $b$ (and thus
in particular if $b$ is part of the input). This can be achieved
by simply adding $b-1$ dummy teams that do not participate in any
game and initially have more points than $t_{1}$.

\section{Debating Tournaments}

In this section we will consider the \DebatingTournament{} problem:
We are given a set of teams $T=\{t_{1},\dots,t_{n}\}$, where $n$
is a multiple of $4$, and a vector $s\in\mathbb{R}^{n}$, where entry
$s_{i}$ specifies how many points team $t_{i}$ has scored so far.
We further get a parameter $k$ which indicates how many rounds (i.e.,
match days) are left to play. Contrary to the league setting from
the previous section, the fixtures are not determined beforehand.
At each match day the teams with ranks $4r+1,4r+2,4r+3$, and $4r+4$ for
each $r\in\mathbb{N}_{0}$ play a game. The points for winning the
games are distributed as in the \DebatingLeague{} setting. Additionally, we are given a parameter
$b$. We want to decide whether there are outcomes for all remaining matches such that 
at the end there are at most $b-1$ teams with more points than $t_{1}$. Since we assume
that in case of ties $t_{1}$ is always preferred, this means that $t_{1}$ finishes
among the $b$ best teams. This is an interesting question since in
debating tournaments it is common to have several rounds in the
above format, after which only the best $b$ teams are promoted to
the playoffs in which a knock-out elimination mode is played. Teams
who manage to finish among the $b$ best teams are said to \emph{break}.
Note that for $b=1$ this problem is identical to the question whether
team $t_{1}$ can still place first. We prove that the problem is fixed
parameter tractable (FPT) if both $k$ and $b$ are taken as parameters
by giving an algorithm with a running time of $O(f(k+b) \cdot n)$.

Recall the assumption that in tie-breaking $t_{1}$ is always preferred.
For the other teams, we assume w.l.o.g.~that we have a fixed
total order for the teams that specifies how to break ties if
two teams have exactly the same number of points. The next lemma states
a necessary condition for when $t_{1}$ can still break: $t_{1}$
has to be among the best $4^{k} b$ teams in the initial ranking
$s$. For our algorithm, we use this lemma to output ``no'' if $t_{1}$
is not among the first $4^{k}b$ teams in $s$. 
\begin{lemma}
\label{lem:FPTlemma} Let $t$ be a team that is among the best $4^{\ell}b$
teams when there are $\ell\in\{0,...,k\}$ rounds left to be played.
Then it has to be among the best $4^{\ell+1}b$ teams when there are
$\ell+1$ rounds left to be played. If a team is among the best
$b$ teams at the end of the tournament then it must be among the
best $4^{k}b$ teams when there are $k$ rounds left to be played.\end{lemma}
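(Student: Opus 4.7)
My plan is to reduce both parts of the lemma to a single-round claim and then iterate. Concretely, setting $R := 4^\ell b$, the first statement is equivalent to its contrapositive: if a team $t$ has rank strictly greater than $4R$ before some round, then its rank after that round is strictly greater than $R$. I would prove this single-round claim directly and then obtain both parts of the lemma as corollaries.

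For the single-round claim, suppose $t$ has old rank $q \geq 4R+1$. The first observation is that the $4R$ teams ranked above $t$ before the round occupy exactly positions $1, 2, \dots, 4R$, and by the Swiss-style pairing rule these are partitioned into the top $R$ groups of four that play amongst themselves. The second step is to isolate, in each of those $R$ groups, the unique team that wins its game and therefore gains $3$ points; call these winners $w_1, \dots, w_R$. The third step is a score comparison: since each $w_i$ outranked $t$ before the round, $w_i$'s old real score is at least $s_t$, so $w_i$'s new real score is at least $s_t + 3$, whereas $t$ gains at most $3$ points and hence ends the round with score at most $s_t + 3$. If the old scores were strictly ordered, $w_i$ strictly outranks $t$ after the round; if they were equal, the fixed tie-breaking order (with $t_1$ always preferred) must have already put $w_i$ above $t$, and the same order keeps $w_i$ above $t$ after the round. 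Either way all $R$ winners land above $t$, so $t$'s new rank is at least $R+1$.

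With the single-round claim established, the first statement of the lemma is immediate by taking $R = 4^\ell b$. For the second statement I iterate: a team among the best $b = 4^0 b$ at the end of the tournament must, by one application, have been among the best $4b = 4^1 b$ with one round left; applying this $k$ times in total shows that the team was among the best $4^k b$ with $k$ rounds still to play.

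The only delicate point I anticipate is the score comparison under tie-breaking, since scores are integers and each team gains at most $3$ points per round; in the boundary case a group winner $w_i$ could end the round tied with $t$ in real score. The saving observation is that tie-breaking is a single fixed total order on the teams (with $t_1$ always preferred), so whatever priority previously placed $w_i$ above $t$ still places $w_i$ above $t$ after the round, and the integer scores cannot conspire to let $t$ leapfrog any of the $R$ winners through ties.
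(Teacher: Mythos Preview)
Your proposal is correct and follows essentially the same argument as the paper: the top $4R$ teams are paired into $R$ matches by the Swiss rule, their $R$ winners each gain $3$ points and therefore stay ahead of $t$, forcing $t$'s new rank to exceed $R$; the second claim then follows by iterating this single-round step. Your version is in fact slightly more careful than the paper's, which simply asserts that the $R$ winners ``will have more points than $t$'' without explicitly treating the boundary case where a winner and $t$ are tied in score both before and after the round; your observation that the fixed tie-breaking order preserves the relative position in that case fills this small gap.
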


\begin{proof}
We start with the first claim. Assume for contradiction that team
$t$ is at a position larger than $4^{\ell+1}b$ when there are $\ell+1$
rounds left to be played and it is among the best $4^{\ell}b$ teams
when there are $\ell$ rounds left to be played. Observe that in the
round when $\ell+1$ games are left, the $4^{\ell+1}b$ best placed
teams will play $4^{\ell}b$ matches. Each of these games must have
a winner and among the participating teams $4^{\ell}b$ teams must win their
respective match (i.e., score 3 points) and thus will have more points
than $t$ when $\ell$ rounds are left, even if $t$ wins its match.
Hence, with $\ell$ rounds left to play, team $t$ must have a position
worse than $4^{\ell}b$.

The second claim can be shown by induction using the first claim as
the inductive step: If a team $t$ is among the best $b$ teams when
$\ell=0$ rounds are left to be played then it must be among the best
$4b$ teams before the last round, among the best $4^{2}b$ teams
before the last two rounds, \dots, and among the best $4^{k}b$ teams
when there are $k$ rounds left.
\end{proof}

Now we describe a recursive FPT algorithm with parameters $b$ and
$k$, that solves a given instance of \DebatingTournament{}. We
define two sets $S_{>t_{1}}:=\{t_{i}|s_{i}>s_{1}\}$ and $S_{\le t_{1}}:=\{t_{i}|s_{i}\le s_{1}\}$.
Both sets can be constructed in time $O(n)$. If $|S_{>t_{1}}|>4^{k}b$,
then the algorithm stops as team $t_{1}$ cannot break anymore by
Lemma~\ref{lem:FPTlemma}. Otherwise, the algorithm finds the best
$4^{k}b$ teams by taking team $t_{1}$, all teams from $S_{>t_{1}}$
and filling the remaining $4^{k}b-|S_{>t_{1}}| - 1$ slots with teams
from $S_{\leq t_{1}}$ in descending order of points. This step can
be implemented in time $O(4^{k}b \cdot n)$: we iterate over all elements
of $S_{\leq t_{1}}$ and keep track of the best team that was not yet added. 
When the iteration finished, we add the best team we found and mark it
as added. We have one iteration over $O(n)$ elements
for each free slot of of the $O(4^{k}b)$ teams,
and thus we need a running time of $O(4^{k}b \cdot n)$. Denote by $T^{(k)}$
the obtained set of teams.

The teams in $T^{(k)}$ play $4^{k-1}b$ matches. We guess the outcomes
of all these matches that still allow $t_{1}$ to be among the best
$b$ teams at the end. For each match there are $4!$ possible outcomes
and thus there are $(4!)^{4^{k-1}b}$ possible game outcomes to enumerate.
We update the scores of the teams accordingly. Denote by $T^{(k-1)}$
the first $4^{k-1}b$ teams in the resulting ranking. Lemma~\ref{lem:FPTlemma}
implies that in any outcome of all matches of the $n$ given teams
all teams in $T^{(k-1)}$ must also be in $T^{(k)}$. This justifies
that we enumerate only the matches for the teams in $T^{(k)}$, rather
than the matches for all $n$ given teams. Then we guess the outcome
of the $4^{k-2}b$ matches for the teams in $T^{(k-1)}$ that allows
$t_{1}$ to break eventually. We continue recursively for all remaining
rounds. For each guess of the outcomes of a round, e.g., when there
are only $\ell$ rounds remaining and we have $4^{\ell}b$ teams left
to consider, we make one recursive call to our routine with $\ell-1$
remaining rounds and $4^{\ell-1}b$ remaining teams. 

To evaluate the complexity of the algorithm let us observe that for
a single matchday there are at most $(4!)^{4^{k-1} b}$ possible outcomes,
since 
each match has $4!$ possible outcomes and during a single round of the
tournament there are at most $4^{k-1} b$ games to be played.
The recursion depth is $k$ which yields an overall running time of
$\left((4!)^{4^{k-1} b}\right)^{k} = 2^{(2b)^{O(k)}}$
of our algorithm.

In total, we need time $O(4^{k}b \cdot n)$ for the first phase of
the algorithm in which we determine the best $4^k b$ teams.
For the simulation of all possible outcomes we need time
$2^{(2b)^{O(k)}}$.
Note that if we set $b=1$, the algorithm decides in time $n\cdot2^{2^{O(k)}}$
whether $t_{1}$ can place first in a tournament without playoffs.
Thus, this problem is FPT for parameter $k$

\begin{theorem}
If there are $k$ remaining rounds to be played in a debating tournament,
there is an algorithm that decides in time $n\cdot2^{(2b)^{O(k)}}$
whether $t_{1}$ can place among the first $b$ teams at the end of
the tournament. 
\end{theorem}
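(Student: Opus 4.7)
The plan is to verify correctness and running time of the recursive algorithm sketched in the preceding paragraphs. First, I would justify the pruning: by Lemma~\ref{lem:FPTlemma}, if $|S_{>t_1}|>4^k b$ then $t_1$ cannot finish in the top $b$ no matter how the remaining matches go, so rejecting is correct. Otherwise, I would show by downward induction on $\ell \in \{k,k-1,\dots,0\}$ that, in any run where $t_1$ ends among the best $b$ teams, the top $4^\ell b$ teams with $\ell$ rounds remaining form a subset of the sets $T^{(\ell)}$ maintained by the algorithm. The base case is the construction of $T^{(k)}$ from $\{t_1\}\cup S_{>t_1}$ together with the highest-scoring teams of $S_{\le t_1}$, and the inductive step is a direct application of Lemma~\ref{lem:FPTlemma} to the scores updated after round $\ell$.

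Next, correctness reduces to the following recursive claim: with $\ell$ rounds remaining and candidate set $T^{(\ell)}$ of size $4^\ell b$, team $t_1$ can still break if and only if there is an outcome for the $4^{\ell-1} b$ matches played among the teams in $T^{(\ell)}$ such that the resulting top-$4^{\ell-1} b$ subset $T^{(\ell-1)}$ is itself a yes-instance with parameter $\ell-1$. By the invariant, teams outside $T^{(\ell)}$ play no role in whether $t_1$ breaks, so it suffices to enumerate the $(4!)^{4^{\ell-1} b}$ orderings of the games among $T^{(\ell)}$, update scores, recompute the ranking on $T^{(\ell)}$ to obtain $T^{(\ell-1)}$, and recurse. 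At $\ell=0$ we simply test whether at most $b-1$ teams in $T^{(0)}$ have strictly more points than $t_1$.

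For the running time I would multiply the branching factors across the $k$ recursion levels. The total number of leaves is bounded by
\[
\prod_{\ell=1}^{k}(4!)^{4^{\ell-1} b} \;=\; (4!)^{b(4^{k}-1)/3} \;=\; 2^{O(b\cdot 4^{k})},
\]
and since $b \cdot 4^{k} \le (2b)^{2k+1}$, this is $2^{(2b)^{O(k)}}$. The work per recursion node (updating $O(4^{\ell} b)$ scores and re-sorting to extract $T^{(\ell-1)}$) is polynomial in $4^{k} b$ and is absorbed into this bound. Adding the initial $O(4^{k} b \cdot n)$ cost of constructing $T^{(k)}$ via repeated linear scans of $S_{\le t_1}$ yields the overall bound $n \cdot 2^{(2b)^{O(k)}}$.

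The main obstacle is formalising the pruning invariant. The natural worry is that some team outside $T^{(\ell)}$ might later surge into the top $b$ by accumulating points against other discarded teams; Lemma~\ref{lem:FPTlemma} is precisely what rules this out, since any such team would already have had to be in the top $4^\ell b$ at this stage. Once the invariant is set up cleanly and the recursion is shown to preserve it, the complexity calculation is a routine geometric-sum estimate and the theorem follows.
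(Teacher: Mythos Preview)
Your proposal is correct and follows essentially the same approach as the paper: prune via Lemma~\ref{lem:FPTlemma}, extract the top $4^k b$ teams in $O(4^k b\cdot n)$ time, then recursively enumerate all outcomes of the matches among these teams, using the lemma again to justify that only the retained teams matter at each level. Your running-time analysis via the geometric sum $\prod_{\ell=1}^{k}(4!)^{4^{\ell-1}b}$ is slightly sharper than the paper's cruder bound $\bigl((4!)^{4^{k-1}b}\bigr)^{k}$, but both collapse to $2^{(2b)^{O(k)}}$, so the arguments are interchangeable.
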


\subsection{Constant number of rounds}

We present an algorithm that decides in time $n^{O(k^{4})}$ whether
a team can still break if there are $k$ more rounds to play. In particular,
this implies that for any constant $k$ the problem is polynomial
time solvable, in contrast to \DebatingLeague{}.

As before, suppose we are given a ranking with $n$ teams where for each team
$t_{i}$ we are given a value $s_{i}$ that denotes how many points
team $i$ has scored so far. Again, assume that after the last
round the first $b$ teams in the ranking break (and thus participate
in the play-offs). Also, we are given a value $k$ that denotes the
number of remaining rounds and we want to decide whether $t_{1}$
can still break. Consider a round such that including this round there
are only $\ell\le k$ more rounds to play. For each team $t_{i}$
let $s_{i}^{\ell}$ denote its score at the beginning of the round. We
distinguish three types of teams: teams $t_{i}$ with $s_{i}^{\ell}>s_{1}^{\ell}+3\ell$,
teams $t_{i}$ with $s_{1}^{\ell}-3\ell\le s_{i}^{\ell}\le s_{1}^{\ell}+3\ell$,
and teams $t_{i}$ with $s_{i}^{\ell}<s_{1}^{\ell}-3\ell$. Denote
those teams by $T_{T}^{\ell},T_{M}^{\ell},$ and $T_{B}^{\ell}$,
respectively (for \uline{t}op, \uline{m}iddle, and \uline{b}ottom).
At the end of the tournament, the final score for each team $t_{i}$
will be in $\{s_{i}^{\ell},...,s_{i}^{\ell}+3\ell\}$. Thus, during
the last $k$ rounds team $t_{1}$ cannot overtake any of the teams
in $T_{T}^{k}$ and none of the teams in $T_{B}^{k}$ can overtake
$t_{1}$. Thus, intuitively, only the exact scores teams in $T_{M}^{k}$
are relevant when deciding whether $t_{1}$ can still break. Our algorithm
enumerates all possible remaining outcomes of the remaining matches
but in doing so, it does not keep track of the scores of the teams
in $T_{T}^{k}\cup T_{B}^{k}$. For the initial scores of the teams
in $T_{M}^{k}$ there are only $O(k)$ possibilities and during $k$
rounds a team can score at most $O(k)$ points. Thus there are also only $O(k)$
possibilities for the scores of teams in $T_{M}^{k}$ during the last
$k$ rounds. In order to describe the ranking for those teams, up
to permutations it sufficies to keep track of the total number of
teams with each of the $O(k)$ possible scores. This yields $n^{O(k)}$ many
possibilities in total which allows us to solve the problem via a
dynamic program.

Formally, we will pretend that all teams in $T_{T}^{k}$ have exactly
the same number of points initially and that the same is true for
all teams in $T_{T}^{k}$. This is justified by the following lemma. 
\begin{lemma}
\label{lem:when-breaking}Assume that there are only $\ell$ rounds
left to play. Consider an initial ranking given by a number of points
$s_{i}^{\ell}$ for each team $t_{i}$. Then $t_{1}$ can still break
if and only if it can still break in any initial ranking given by
a number of points $\bar{s}_{i}^{\ell}$ for each team $t_{i}$ such
that 
\begin{itemize}
\item $s_{1}^{\ell}=\bar{s}_{1}^{\ell}$, 
\item there is a bijection $f:T_{M}^{\ell}\rightarrow\bar{T}_{M}^{\ell}:=\{t_{i}|\bar{s}_{i}^{\ell}-3\ell\le\bar{s}_{1}^{\ell}\le\bar{s}_{i}^{\ell}+3\ell\}$
such that for each $t_{i}\in T_{M}^{\ell}$ we have that in $s^{\ell}$
and $\bar{s}^{\ell}$ the teams $t_{i}$ and $f(t_{i})$ have the
same rank and the same scores and $f(t_{1})=t_{1}$, 
\item $|T_{T}^{\ell}|=|\bar{T}_{T}^{\ell}|:=|\{t_{i}|\bar{s}_{i}^{\ell}>\bar{s}_{1}^{\ell}+3\ell\}|$
and $|T_{B}^{\ell}|=|\bar{T}_{B}^{\ell}|:=|\{t_{i}|\bar{s}_{i}^{\ell}<\bar{s}_{1}^{\ell}-3\ell\}|$. 
\end{itemize}
\end{lemma}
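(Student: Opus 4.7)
The biconditional is symmetric, so it suffices to prove that if $t_1$ can break under $s^\ell$ then $t_1$ can also break under $\bar{s}^\ell$. My plan is to fix any sequence of game outcomes for the remaining $\ell$ rounds in the $s^\ell$ scenario under which $t_1$ finishes among the top $b$, and to construct a matching sequence in the $\bar{s}^\ell$ scenario by round-by-round simulation driven by the bijection $f$.

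The key preliminary observation is that top and bottom teams are essentially decoupled from $t_1$'s placement. For every $t_i \in T_T^\ell$ and every $\ell' \le \ell$, we have $s_i^{\ell'} \ge s_i^\ell > s_1^\ell + 3\ell \ge s_1^{\ell'} + 3\ell'$, so $t_i$ stays in the top band throughout and in particular finishes strictly above $t_1$; a symmetric argument shows every $t_i \in T_B^\ell$ finishes strictly below $t_1$. Consequently, in either scenario, $t_1$'s final rank equals the number of top teams plus one plus the number of middle teams that finish strictly ahead of $t_1$ under the fixed tie-break. Since $|T_T^\ell| = |\bar{T}_T^\ell|$ by hypothesis, reproducing whether $t_1$ breaks reduces to reproducing the final scores of $t_1$ and of the middle teams.

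The simulation proceeds round by round and maintains the invariant that after $r$ rounds have been played in both scenarios, $f$ still identifies the current middle-band teams with equal current scores and equal current overall ranks, and that $s_1^{\ell-r} = \bar{s}_1^{\ell-r}$. At $r=0$ this is exactly the hypothesis of the lemma. To advance, note that since at every intermediate round each current top team is strictly above every current middle team in points (and symmetrically for bottom), and since the counts of current top and bottom teams agree between the two scenarios, the pairings into groups of four consecutive ranks align modulo $f$. For a group consisting entirely of middle teams I copy the chosen $s^\ell$ outcome to $\bar{s}^\ell$ via $f$; for a group consisting entirely of top or entirely of bottom teams I use any valid outcome, since by the monotonicity above such teams remain in their band regardless.

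The main obstacle is the boundary groups of four teams that mix middle teams with top or bottom teams, which occur whenever the current count of top (resp.\ bottom) teams is not a multiple of four. For such a group I exploit the freedom in the internal $3$-$2$-$1$-$0$ assignment: in the $\bar{s}^\ell$ scenario I assign the $f$-images of the middle teams exactly the same points their preimages received in $s^\ell$, and distribute the remaining positions arbitrarily among the accompanying top or bottom teams. This is always feasible because a current top team has a surplus of at least $1$ over the threshold $s_1^{\ell'} + 3\ell'$ while the threshold decreases by exactly $3$ per round and a team's score increases by at most $3$, so any in-game result keeps them in the top band (the argument for bottom teams is symmetric). After the round the invariant is re-established; after $r = \ell$ rounds the middle teams and $t_1$ have matching final scores in the two scenarios, so by the reduction of the second paragraph $t_1$ breaks in $\bar{s}^\ell$ if and only if it breaks in $s^\ell$.
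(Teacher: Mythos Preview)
Your proof is correct and follows essentially the same strategy as the paper's: both transfer outcomes round by round from $s^\ell$ to $\bar{s}^\ell$ via the game correspondence induced by $f$ together with the equal top/bottom counts, so that the middle teams and $t_1$ end up with identical scores. The paper phrases this as an induction on $\ell$ (handle one round, then invoke the inductive hypothesis) whereas you maintain an invariant across all rounds and spell out the boundary-group case more explicitly, but the underlying argument is the same.
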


\begin{proof}
We prove the claim by induction. For $\ell=0$ it is immediate since
$t_{1}$ can still break if and only if $|T_{T}^{0}|<b$. Suppose
now the claim is true for some value $\ell$ and we want to prove
it for $\ell+1$. It is immediate that $t_{1}$ can break in the initial
ranking $s^{\ell+1}$ if it can break in any initial ranking $\bar{s}^{\ell+1}$
with the above properties since $s^{\ell+1}$ satisfies these properties.

Now suppose that $t_{1}$ can break in the initial ranking $s^{\ell+1}$
and consider an initial ranking $\bar{s}^{\ell+1}$ with the above
properties. Consider the outcome of the games in the current round
for the ranking $s^{\ell+1}$ such that $t_{1}$ breaks after the
last round. We construct an outcome of the games of the current round
for the initial ranking $\bar{s}^{\ell+1}$. Consider a game $\bar{g}$
in which the teams $\{\bar{t}^{(1)},\bar{t}^{(2)},\bar{t}^{(3)},\bar{t}^{(4)}\}$
participate. There is a corresponding game $g$, played by team $\{t^{(1)},t^{(2)},t^{(3)},t^{(4)}\}$
according to the initial ranking $s^{\ell+1}$ such that for each
$j\in\{1,2,3,4\}$ we have that 
\begin{itemize}
\item if $\bar{t}^{(j)}\in\bar{T}_{M}^{\ell}$ then $t^{(j)}=f^{-1}(\bar{t}^{(j)})\in T_{M}^{\ell}$
and thus in $s^{\ell+1}$ and $\bar{s}^{\ell+1}$ the teams $\bar{t}^{(j)}$
and $t^{(j)}$ have exactly the same rank and the same score, 
\item if $\bar{t}^{(j)}\in\bar{T}_{T}^{\ell}$ then $t^{(j)}\in T_{T}^{\ell}$,
and 
\item if $\bar{t}^{(j)}\in\bar{T}_{B}^{\ell}$ then $t^{(j)}\in T_{B}^{\ell}$. 
\end{itemize}
Note that the first property implies that if $\bar{t}^{(j)}=t_{1}$
then $t^{(j)}=t_{1}$. For defining the outcome of $\bar{g}$ we simply
the take of the outcome of game $g$ from the known outcomes for all
remaining matches that let $t_{1}$ break eventually. For each $j\in\{1,2,3,4\}$
we assign the team $\bar{t}^{(j)}$ exactly the same score as team
$t^{(j)}$ in those outcomes. We do this operation with all games
$\bar{g}$. Denote by $\bar{s}^{\ell}$ the resulting ranking and
by $s^{\ell}$ the ranking resulting if we apply those outcomes to
$s^{\ell+1}$. Based on the induction hypothesis, we claim that $t_{1}$
can still break in $\bar{s}^{\ell}$. First, it is clear that $s_{1}^{\ell}=\bar{s}_{1}^{\ell}$
since $f(t_{1})=t_{1}$. Consider a team $t_{i}$. If $t_{i}\in\bar{T}_{T}^{\ell-1}$
then $t_{i}\in\bar{T}_{T}^{\ell}$ and also if $t_{i}\in T_{T}^{\ell-1}$
then $t_{i}\in T_{T}^{\ell}$. Similarly, if $t_{i}\in\bar{T}_{B}^{\ell-1}$
then $t_{i}\in\bar{T}_{B}^{\ell}$ and also if $t_{i}\in T_{B}^{\ell-1}$
then $t_{i}\in T_{B}^{\ell}$. Finally, if $t_{i}\in\bar{T}_{M}^{\ell-1}$
then 
\begin{itemize}
\item $t_{i}\in\bar{T}_{M}^{\ell}$ if and only if $f^{-1}(t_{i})\in T_{M}^{\ell}$
and then $t_{i}$ and $f^{-1}(t_{i})$ have the same score in $s^{\ell}$
and $\bar{s}^{\ell}$ 
\item $t_{i}\in\bar{T}_{T}^{\ell}$ if and only if $f^{-1}(t_{i})\in T_{T}^{\ell}$,
and 
\item $t_{i}\in\bar{T}_{B}^{\ell}$ if and only if $f^{-1}(t_{i})\in T_{B}^{\ell}$. 
\end{itemize}
Therefore, $|T_{T}^{\ell}|=|\bar{T}_{T}^{\ell}|$ and $|T_{B}^{\ell}|=|\bar{T}_{B}^{\ell}|$
and also there is a bijection $f:T_{M}^{\ell}\rightarrow\bar{T}_{M}^{\ell}$
with the properties required by the induction hypothesis. Thus, the
induction hypothesis implies that $t_{1}$ can still break when starting
with the initial ranking $\bar{s}^{\ell}$.
\end{proof}

We use Lemma~\ref{lem:when-breaking} to justify that we can work
with a new initial ranking $s'$ instead of $s$. Note that the sets
$T_{B}^{k}\dot{\cup}T_{M}^{k}\dot{\cup}T_{T}^{k}$ form a partition
of the participating teams. For each team $t_{i}\in T_{B}^{k}$ we
define $s'_{i}:=0$. For each team $t_{i}\in T_{M}^{k}$ we define
$s'_{i}:=s_{i}$. For each team $t_{i}\in T_{T}^{k}$ we define $s'_{i}:=s_{1}+3k+1$.
The next proposition follows immediately from Lemma~\ref{lem:when-breaking}. 
\begin{proposition}
The team $t_{1}$ can break with the initial ranking $s'$ if and
only if it can break with the initial ranking $s$. 
\end{proposition}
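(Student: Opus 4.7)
The plan is to apply Lemma~\ref{lem:when-breaking} with $\ell := k$ and $\bar{s}^k := s'$. The proposition will then follow directly once the three bullet-point conditions of that lemma are verified. The construction of $s'$ is tailor-made for this: it preserves the absolute scores inside $T_M^k$ exactly, and it pushes every team in $T_T^k$ (respectively $T_B^k$) just above the top threshold (respectively just below the bottom threshold) relative to $t_1$, so the partition into top/middle/bottom and its cardinalities are preserved.

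For the first condition, note that $t_1 \in T_M^k$ trivially, since $s_1 - 3k \le s_1 \le s_1 + 3k$; hence by the definition of $s'$ on middle teams we have $s'_1 = s_1$, which is exactly $\bar{s}_1^k = s_1^k$. For the second condition, take $f : T_M^k \to \bar{T}_M^k$ to be the identity map. Because $s'_i = s_i$ for every $t_i \in T_M^k$, each such $t_i$ has the same score under $s'$ as under $s$, so (under the fixed total tie-breaking order) the same rank within the middle set; in particular $f(t_1) = t_1$. For the third condition, each $t_i \in T_T^k$ satisfies $s'_i = s_1 + 3k + 1 > s'_1 + 3k$, putting it in $\bar{T}_T^k$; and each $t_i \in T_B^k$ satisfies $s'_i = 0 < s'_1 - 3k$, putting it in $\bar{T}_B^k$. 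Since $T_T^k, T_M^k, T_B^k$ partition the teams and these inclusions are into pairwise disjoint sets on the $\bar{s}^k$ side, they force the cardinality equalities $|T_T^k| = |\bar{T}_T^k|$ and $|T_B^k| = |\bar{T}_B^k|$.

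The only delicate point, and the main (mild) obstacle, is the strict inequality $0 < s_1 - 3k$ used to place the bottom teams in $\bar{T}_B^k$: this requires $s_1 > 3k$. The cleanest way to dispatch it is to note that we may assume $s_1 > 3k$ without loss of generality, since shifting every entry of the initial score vector by the same additive constant does not change whether $t_1$ can break---only relative score differences matter for the outcome of any combination of remaining matches. Alternatively, one can simply replace the value $0$ in the definition of $s'_i$ for $t_i \in T_B^k$ by any real strictly less than $s_1 - 3k$; the subsequent dynamic program only relies on the top/middle/bottom trichotomy and the exact scores in $T_M^k$, so this substitution is harmless. Either way, all three hypotheses of Lemma~\ref{lem:when-breaking} are satisfied, and the proposition follows.
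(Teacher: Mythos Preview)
Your proof is correct and follows exactly the paper's approach: the paper simply states that the proposition follows immediately from Lemma~\ref{lem:when-breaking}, and you have spelled out the straightforward verification of its hypotheses. Your observation about needing $s_1 > 3k$ is a nice catch; note that it is automatically satisfied whenever $T_B^k \neq \emptyset$ and all initial scores are nonnegative (since then some $s_i \ge 0$ satisfies $s_i < s_1 - 3k$), which is the intended setting.
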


In our algorithm, we use a dynamic program in order to enumerate all
possible outcomes of the remaining $k$ rounds when starting with
the initial ranking $s'$. Key is that there are only $O(k)$ different
scores that a team can have during these $k$ rounds since there are
only $O(k)$ different initial scores and each team can score at most
$3k$ many points. We call two score vectors $\tilde{s},\tilde{s}'$
\emph{equivalent }if $\tilde{s}_{1}=\tilde{s}'_{1}$ and if for each
value $x$ the number of teams with exactly $x$ points is the same
in $\tilde{s}$ and $\tilde{s}'$. The team $t_{1}$ can clearly break
for an initial score vector $\tilde{s}$ if and only if it can still
break in any equivalent initial score vector $\tilde{s}'$. 
\begin{lemma}
When starting with the score vector $s'$, there are only $n^{O(k)}$
equivalence classes for the score vectors arising during the last
$k$ rounds. \end{lemma}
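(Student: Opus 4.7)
The plan is to bound the number of equivalence classes by first bounding the number of distinct score values that can appear anywhere in $\tilde{s}$, and then counting the ways of distributing $n$ teams among those values.

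First I would observe that, starting from $s'$, every team's score stays in a narrow, round-dependent window during the $k$ remaining rounds, because each team gains at most $3$ points per round and hence at most $3k$ points in total. Concretely, a team $t_i\in T_B^k$ has $s'_i=0$ and therefore $\tilde{s}_i\in\{0,1,\dots,3k\}$; a team $t_i\in T_T^k$ has $s'_i=s_1+3k+1$ and therefore $\tilde{s}_i\in\{s_1+3k+1,\dots,s_1+6k+1\}$; and a team $t_i\in T_M^k$ satisfies $s'_i=s_i\in\{s_1-3k,\dots,s_1+3k\}$, so $\tilde{s}_i\in\{s_1-3k,\dots,s_1+6k\}$. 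Each of these three ranges contains $O(k)$ integers, so their union $X\subseteq\mathbb{Z}$ has $|X|=O(k)$, i.e., across all teams and all rounds there are only $O(k)$ possible score values.

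Next I would note that an equivalence class, by definition, is determined by the pair consisting of $\tilde{s}_1$ and the multiset of entries of $\tilde{s}$, which in turn is captured by the vector $(n_x)_{x\in X}$ where $n_x:=|\{t_i : \tilde{s}_i=x\}|$. Since $\sum_{x\in X}n_x=n$ and $|X|=O(k)$, the number of such nonnegative integer vectors is at most $\binom{n+|X|}{|X|}=n^{O(k)}$. There are only $O(k)$ choices for the value $\tilde{s}_1\in X$, so multiplying gives a total of $n^{O(k)}$ equivalence classes, which is the claimed bound.

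There is essentially no obstacle here; the only point requiring care is verifying that the three score windows collectively contain only $O(k)$ integers, and this is immediate from the definition of $s'$ together with the fact that at most $3k$ additional points can be accumulated per team over $k$ rounds.
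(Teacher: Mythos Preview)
Your proof is correct and follows essentially the same approach as the paper: both arguments observe that only $O(k)$ distinct score values can occur (you make this explicit by analyzing the three groups $T_B^k$, $T_M^k$, $T_T^k$ separately), and then count equivalence classes by specifying $t_1$'s score together with the number of teams at each of the $O(k)$ possible values, yielding $n^{O(k)}$. The only minor quibble is that the stars-and-bars count is $\binom{n+|X|-1}{|X|-1}$ rather than $\binom{n+|X|}{|X|}$, but both are $n^{O(k)}$, so this does not affect the conclusion.
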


\begin{proof}
For the number of points of $t_{1}$ there are only $O(k)$ possibilities.
The other teams there can have at most $O(k)$ different scores. Thus,
in order to describe an equivalence class it suffices to specify the
points of $t_{1}$ and how many teams there are with each of the $O(k)$
possible different scores. This gives only $n^{O(k)}$ different possibilities
in total. 
\end{proof}

Our dynamic program works as follows: we have a DP-table entry $(\ell,C)$
for each $\ell\in\{0,...,k\}$ and each equivalence class $C$ of
the possibly arising score vectors. We store either ``yes'' or ``no''
in this cell, corresponding to whether or not $t_{1}$ can still break
if there are $\ell$ more rounds to play and we start with a score
vector that is equivalent to $C$. 
\begin{lemma}
Let $\ell\in\{0,...,k\}$. Suppose we have computed the entry of the
cell $(\ell,C')$ for each equivalence class $C'$. Then in time $n^{O(k^4)}$
we can compute the entry for a cell $(\ell+1,C)$.\end{lemma}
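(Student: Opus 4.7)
The plan is to compute the entry $(\ell{+}1,C)$ by enumerating every equivalence class $C'$ that can be reached from $C$ after one more round is played, and returning ``yes'' exactly when at least one such $C'$ has $(\ell,C')=\text{yes}$ in the previously-computed layer of the table. The key conceptual point is that the equivalence class of the resulting score vector depends only on how many teams end up with each possible score, not on the identity of which team ended where. This is what lets the naive $(4!)^{n/4}$ enumeration of round outcomes collapse to only $n^{O(k^4)}$ distinct outcomes ``up to equivalence''.

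First, I reconstruct a canonical representative of $C$: the class specifies $s_1$ together with the counts of teams in each of $O(k)$ distinct score \emph{buckets}. Sorting teams by score, breaking ties by a fixed rule, produces a canonical ranking, and the matches of the current round are the consecutive quadruples at positions $4j{+}1,\dots,4j{+}4$. I assign to each match a \emph{type}, namely the multiset of buckets from which its four participants are drawn. Since there are $O(k)$ buckets and a type is a multiset of size~$4$, there are at most $\binom{O(k)+3}{4}=O(k^4)$ distinct types.

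Next, I enumerate outcomes in an aggregated way. For a fixed type $\tau$, swapping two matches of the same type that produce the same ordered outcome cannot change the resulting equivalence class, so only the number of matches of type $\tau$ producing each of the (at most) $24$ possible ordered outcomes matters. For each type $\tau$ with $m_\tau$ matches, the number of outcome profiles is thus bounded by $\binom{m_\tau+23}{23}=O(n^{23})$. Taking the product over all $O(k^4)$ types yields at most $n^{O(k^4)}$ profiles in total. The unique match containing $t_1$ is handled separately, with its $24$ outcomes enumerated one by one so that $t_1$'s new score is individually pinned down.

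For each enumerated outcome profile I compute the resulting equivalence class $C'$: for each bucket I read off how many of its teams gained $0$, $1$, $2$, or $3$ points, aggregate the results into updated bucket counts, and record $t_1$'s new score. This takes $O(k)$ time and produces the $O(k)$-size description of $C'$. The cell $(\ell,C')$ is then looked up, and $(\ell{+}1,C)$ is set to ``yes'' precisely if some profile yields a ``yes''. The main subtle step is justifying the aggregation: one has to check that every equivalence class reachable in one round is produced by some profile and that no spurious transitions are introduced. This is precisely what Lemma~\ref{lem:when-breaking} affords --- since the breaking question depends only on the equivalence class, treating matches of the same type with the same ordered outcome as interchangeable is semantically faithful. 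The total running time is $n^{O(k^4)}\cdot O(k)=n^{O(k^4)}$ as claimed.
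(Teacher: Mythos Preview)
Your proposal is correct and follows essentially the same approach as the paper: define game types by the multiset of scores of the four participants (giving $O(k^4)$ types), enumerate for each type how many of its matches realize each of the $4!$ possible outcomes (giving $n^{O(1)}$ choices per type and hence $n^{O(k^4)}$ overall), handle $t_1$'s match separately, and look up the resulting cell $(\ell,C')$. Your write-up is somewhat more explicit than the paper's---you spell out the stars-and-bars bound $\binom{m_\tau+23}{23}$, the reconstruction of a canonical representative, and the appeal to Lemma~\ref{lem:when-breaking} for correctness of the aggregation---but the underlying argument is the same.
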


\begin{proof}
Consider a score vector corresponding to $C$. We distinguish the
different types of the games arising in the current round. We say
that two games with teams $\{t^{(1)},t^{(2)},t^{(3)},t^{(4)}\}$ and
$\{\bar{t}^{(1)},\bar{t}^{(2)},\bar{t}^{(3)},\bar{t}^{(4)}\}$, respectively,
are of the same \emph{type} if there exists a bijection
$g:\{t^{(1)},t^{(2)},t^{(3)},t^{(4)}\}\rightarrow\{\bar{t}^{(1)},\bar{t}^{(2)},\bar{t}^{(3)},\bar{t}^{(4)}\}$
such that $t^{(j)}$ and $g(t^{(j)})$ have exactly the same score
for each $j\in\{1,2,3,4\}$. There are only $O(k^{4})$ types of games
at only $4!$ different outcomes for each game. Thus, in order to
enumerate all possible outcomes of all games it suffice to guess how
many games of each type have which of the $4!$ possible outcomes.
Finally, there are $4!$ possible outcomes for the game that $t_{1}$
participates in. This gives $n^{O(k^{4})}$ possibilities in total
and for each possibility we obtain a cell $(\ell,C')$ for some equivalence
class $C'$. 
\end{proof}

Thus, in time $n^{O(k^{4})}$ we can fill the entries of all DP-cells.
There is one cell $(k,C)$ such that $C$ corresponds to the equivalence
class that contains $s'$. The entry of this cell is ``yes'' if
and only if $t_{1}$ can still break. 
\begin{theorem}
There is an algorithm with running time $n^{O(k^{4})}$ that decides
whether a given team $t_{1}$ can still break if there are at most
$k$ remaining rounds to play in a tournament, for an arbitrary breaking
threshold $b$ that is part of the input. 
\end{theorem}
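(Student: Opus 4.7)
My plan is to assemble the machinery developed earlier in this subsection — the normalized score vector $s'$, the equivalence classes of score vectors, and the per-round enumeration of outcomes — into a single dynamic program whose running time is $n^{O(k^{4})}$.

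First I would invoke the Proposition following Lemma~\ref{lem:when-breaking} to replace the input score vector $s$ by the normalized $s'$, in which every team of $T_{B}^{k}$ is flattened to $0$ points and every team of $T_{T}^{k}$ is flattened to $s_{1}+3k+1$ points, without changing whether $t_{1}$ can still break. After this reduction the set of scores that any team can attain during the remaining $k$ rounds has size only $O(k)$: the starting scores take only $O(k)$ distinct values (the two flat bands plus the spread in $T_{M}^{k}$), and each team can accumulate at most $3k$ further points. Since an equivalence class is fully described by the score of $t_{1}$ together with the number of other teams at each of the $O(k)$ possible scores, the total number of equivalence classes is $n^{O(k)}$, as shown above.

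Next I would build a DP table indexed by pairs $(\ell,C)$ with $\ell \in \{0,\dots,k\}$ and $C$ an equivalence class, storing a Boolean value which is \emph{yes} iff $t_{1}$ can still break when starting from any representative of $C$ with $\ell$ rounds left. The base case $\ell=0$ is decided by counting, in a representative of $C$, how many teams strictly exceed $t_{1}$'s score and accepting iff this count is less than $b$. For the inductive step at level $\ell+1$ I classify the $n/4$ games of the current round by \emph{type}, i.e.\ by the multiset of four participant scores. There are only $O(k^{4})$ types; for each type I enumerate how many of its games finish with each of the $4!$ possible outcomes, which contributes at most $n^{O(k^{4})}$ profiles per cell. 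Each profile, combined with an explicit branch over the $4!$ outcomes of $t_{1}$'s own game, determines a successor class $C'$ at level $\ell$, and I set $(\ell+1,C)$ to \emph{yes} iff at least one of the resulting successor cells is \emph{yes}.

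The delicate point I anticipate is showing that enumerating game \emph{types} rather than actual pairings is safe: Swiss pairing responds to the total order of the teams, not merely to the multiset of scores, so I must check that every outcome-profile I enumerate is realized by \emph{some} Swiss round consistent with $C$, and conversely that every real Swiss round on a representative of $C$ is captured, up to equivalence, by one of the enumerated profiles. This is precisely what Lemma~\ref{lem:when-breaking} is designed to supply, but writing it down carefully — observing that the bijection $f$ in that lemma commutes with the Swiss ordering on $T_{M}^{\ell}$ — is the step that needs genuine attention. Once it is in place, the answer is simply the value stored in the cell $(k,C^{*})$ whose class contains $s'$, and the overall cost is $n^{O(k)}$ cells times $n^{O(k^{4})}$ work per cell, i.e.\ $n^{O(k^{4})}$, as claimed.
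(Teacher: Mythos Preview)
Your proposal is correct and follows essentially the same approach as the paper: normalize to $s'$ via the Proposition, set up the DP table on pairs $(\ell,C)$ with $n^{O(k)}$ equivalence classes, fill each cell in $n^{O(k^{4})}$ time by enumerating game types and outcome profiles, and read off the answer from the cell containing $s'$. The only difference is that you explicitly flag the ``delicate point'' about game types versus actual Swiss pairings, whereas the paper simply asserts that equivalent score vectors are interchangeable; your caution here is well placed but does not constitute a different argument.
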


{\bf Acknowledgments.}
The research leading to these results has received funding from the 
European Research Council under the European Union’s Seventh 
Framework Programme (FP/2007-2013) / ERC Grant Agreement no.\ 340506. 
Additionally, this house would like to thank Nicolas McLardy for hosting Stefan
in Berlin during the time when this paper was written.%

\sloppy
\bibliographystyle{plain}
\bibliography{Bib}

\end{document}